\newcommand{\bbC}{\mathbb{C}}
\newcommand{\be}{\begin{equation}}
\newcommand{\ee}{\end{equation}}
\newcommand{\til}[1]{\widetilde{#1}}
\newcommand{\N}{\mathbb{N}}
\newcommand{\R}{\mathbb{R}}
\newcommand{\mc}[1]{\mathcal{#1}}
\newcommand{\p}{\partial}
\newcommand{\e}{\varepsilon}
\newcommand{\dl}{\delta}
\newcommand{\re}{{\rm Re}\hskip 1pt }
\newcommand{\im}{{\rm Im}\hskip 1pt }
\newcommand{\ord}{{\mathcal O}}
\newcommand{\ope}[1]{\operatorname{#1}}
\newcommand{\wey}[1]{#1^w}
\newcommand{\bu}[1]{{\textbf{\textup{#1}}}}
\newtheorem{thm}{Théorème}
\newtheorem{lemma}{Lemme}
\newtheorem{proposition}{Proposition}
\title{Résonances Semiclassiques Engendrées par des Croisements de Trajectoires Classiques}
\author{Kenta Higuchi}
\address{Department of Mathematical Sciences, Ritsumeikan University,\\ 
1-1-1 Noji-Higashi, Kusatsu, 525-8577, Japan}
\thanks{The author is supported by Ritsumeikan University, KENKYU-SHOREI scholarship A} 
\email{ra0039vv@ed.ritsumei.ac.jp}
\keywords{résonances, croisements de trajectoires classiques, système d'opérateurs de Schrödinger.
}
\subjclass{35P15; 35C20; 35S99; 47A75.}
\begin{document}
\selectlanguage{french}
\begin{abstract} 
Nous consid\'erons un syst\`eme $2\times2$ d'op\'erateurs de Schr\"odinger semiclassique 1D avec petites interactions par rapport au paramètre semiclassique $h$. 
Nous étudions l'asymptotique des résonances en limite semiclassique près d'une énergie non-captive pour les deux hamiltoniens classiques correspondants. 
Nous montrons l'existence de résonances de largeur $T^{-1}h\log (1/h)$, contrairement au cas scalaire, 
sous la condition que deux trajectoires classiques se croisent et composent une trajectoire périodique de période $T$. 
\end{abstract}

\maketitle

\section{Introduction}

La théorie de résonances quantiques de l'opérateur de Schrödinger
a une longue histoire (voir \cite{DyZw}). En particulier leur répartition asymptotique en limite semiclassique a été bien étudié depuis les années 80 où Helffer et Sjöstrand \cite{HeSj1} ont montré dans un cadre analytique le non-existence de résonance dans un voisinage d'une énergie non-captive de la mécanique classique correspondante (voir aussi \cite{BCD}). 
Dans le cas captif, l'asymptotique de résonances, et surtout leur partie imaginaire (largeur), a été étudié en liaison avec la géométrie de l'ensemble capté \cite{BCD2,GeSj,HeSj1,Sj}.  
Quand le potentiel n'est pas globalement analytique, la zone sans résonance dans le cas non-captif est plus petite. 
En effet, il n'y a pas de résonance avec partie imaginaire d'ordre $h\log(1/h)$, Martinez \cite{Ma2}.

Dans le cas des opérateurs à valeurs matricielles, il faut tenir compte des interactions entre des systèmes hamiltoniens correspondants à différentes valeurs propres du symbole matriciel. Elles ne sont pas négligeables surtout quand des trajectoires classiques se croisent. Fujiié, Martinez et Watanabe  ont traité un croisement entre une trajectoire périodique et une non-captive et obtenu la largeur des résonances qui décrit l'effet du croisement (voir \cite{FMW3} et les références là-dedans).

Motivé par ces résultats, nous nous sommes posé la question: Y a-t-il des résonances engendrées par le croisement de deux trajectoires non-captives? Ce travail donne une réponse affirmative à cette question en précisant l'asymptotique des résonances d'un modèle unidimensionel provenant de l'approximation de Born-Oppenheimer.
Ce modèle a deux trajectoires classiques non-captives qui se croisent et composent une trajectoire périodique. 
Nous démontrons l'existence de résonances engendrées par cette trajectoire périodique dont la largeur est $T^{-1}h\log(1/h)$ 
contrairement au cas scalaire non-captif. Ici, la constante $T$ est la période de la trajectoire. Ce résultat optimise la zone sans résonance obtenu dans \cite{Hi}.

\section{Énoncé du résultat}
Nous consid\'erons le syst\`eme d'op\'erateurs de Schr\"odinger unidimensionnel suivant:
\begin{equation}\label{eq:SchOp}
\begin{aligned}
P(h) &=
\begin{pmatrix}
 P_1 & h W\\
h W^* & P_2
\end{pmatrix}
\end{aligned}
\quad \text{défini sur}\quad L^2(\R)\oplus L^2(\R)
\cong L^2(\R;\bbC^2),
\end{equation}
o\`u $h$ est un param\`etre semiclassique, 
$P_j=h^2D_x^2+V_j(x)$ $(j=1,2)$ 
un op\'erateur de Schr\"odinger avec un potentiel $V_j\in C^\infty(\R;\R)$, $D_x=-id/dx$. 
Ici la perturbation est donnée par un opérateur différentiel d'ordre 1, $W=W(x,hD_x)$ et son adjoint formel $W^*$. 
Nous étudions les résonances de $P(h)$ en limite semiclassique $h\to0_+$ dans un voisinage de $E_0>0$ telle que $V_1$, $V_2$, $W$ vérifient les hypothèses suivantes \textbf{(A1)}--\textbf{(A3)}:

\vspace{0.2cm}
\noindent
\textbf{(A1)}
Chaque $V_j(x)$ ($j=1,2$) vérifie les conditions suivantes:
\begin{enumerate}
\item[a.] $V_j$ se prolonge en une fonction bornée et analytique dans un domaine conique complexe
$$
\Sigma=
\bigl\{x\in\bbC\, ; \,  |\im x|<(\tan\theta_0)|\re x|,\ |\re x|> R_0\bigr\}
$$
pour des constantes $0<\theta_0<\pi/2$ et $R_0>0$. 
\item[b.] $V_j$ admet des limites $v_j^\pm$ quand $\re x\to\pm\infty$ dans $\Sigma$ et $v_1^-<E_0<v_1^+,$ $v_2^->E_0>v_2^+$.
\item[c.] $V_j-E_0$ admet un unique zéro $x_j\in\R$, et $V_j'(x_j)\neq0$. $V_j$ est analytique près de $x_j$.
\end{enumerate}

\vspace{0.2cm}
\noindent
\textbf{(A2)}
$V_1(x)=V_2(x)$ si et seulement si $x=0$.  De plus   $V_1(0)=V_2(0)=0$ et $V_1'(0)> V_2'(0).$ 

\vspace{0.2cm}
\noindent
\textbf{(A3)}
$W(x,hD_x)=r_0(x)+ir_1(x)hD_x,$ où $r_0,r_1\in C^\infty(\R;\R)$  
se prolongent en des fonctions bornées et analytiques dans $\Sigma$. De plus, $W_0:=r_0(0)+ir_1(0)\sqrt{E_0}\neq0$.


 \begin{figure}[tbp]
 \includegraphics[width=0.5\linewidth]{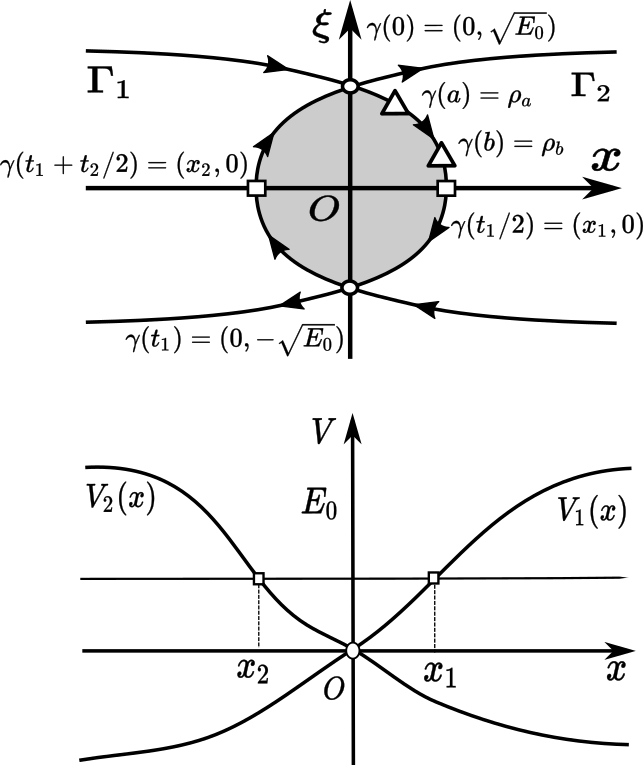}
\bigskip
\begin{center}
 \caption{Potentiels $V_1, V_2$ et trajectoires hamiltoniennes $\Gamma_1, \Gamma_2$.}
\end{center}
 \label{Fig:ex}
 \end{figure}

\vspace{0.2cm}
Sous l'hypothèse \textbf{(A1)}, l'énergie $E_0$ est non-captive pour chaque hamiltonien classique $p_j=\xi^2+V_j$ associé à $P_j$, c'est à dire que, pour chaque ensemble compact $K\subset p_j^{-1}(E_0)$, il existe $T_K>0$ telle que
\begin{equation}
\label{NT} (x,\xi)\in K\Longrightarrow  {\rm exp}(tH_{p_j})(x,\xi) \not \in p_j^{-1}(E_0)\setminus K,\ |t|>T_K.
\end{equation}
Ici, $H_{p_j}=2\xi \p_x - V'_j(x) \p_\xi$ est le champ hamiltonien, et $\exp(tH_{p_j})(x,\xi)$ est le flot hamiltonien correspondant dans l'espace des phases $\R_x\times \R_\xi$ à $p_j$ vérifiant $\exp(tH_{p_j})(x,\xi)|_{t=0}=(x,\xi)$.

Nous fixons $r>0$ arbitrairement grand et cherchons des résonances dans le rectangle 
\be\label{eq:Rec}
R_h(M):
=E_0+h\log(1/h)[-r,r]+i[-Mh\log(1/h),0],
\ee
pour une constante positive $M>0$ (indépendante de $h$). 

Dans $R_h(M)$, les résonances de $P$ sont définies, en tant que valeurs propres de l'opérateur distordu $P_\theta=U_\theta PU_\theta^{-1}$. Ici, $U_\theta u=\left|\zeta_\theta'(x)\right|^{1/2}u\circ\zeta_\theta$ où $\zeta_\theta\in C^\infty(\R;\bbC)$ vérifie 
$$
\zeta_\theta(x)=\left\{
\begin{aligned}
&x &|x|\le R_0,&\\
&xe^{i\theta} &|x|\ge2R_0,&
\end{aligned}\right.\quad0<\theta<\theta_0.
$$
Pour la méthode des distorsions analytiques, voir par exemple \cite[Section~2.7]{DyZw} et les références là-dedans. 
Remarquons que le spectre de $P_\theta$ est discret dans $R_h(M)$ dû à la condition $v_j^\pm\neq E_0$ (voir \cite[Appendix~B]{Hi}). De plus, il est indépendant de $\theta=M'h\log(1/h)$ pour $M'$ assez grand. L'ensemble de ces résonances est noté $\ope{Res}(P)$.

Soit $x_j(E)$ l'unique zéro de $E-V_j(x)$ pour $E\in R_h(M)$. 
On définit des fonctions d'énergie $E$ qui vont donner la condition de quantification des résonances: 
\begin{align}\label{eq:Const}
S(E):=2\left(\int_{x_2(E)}^0\sqrt{E-V_2(x)}dx
+\int_0^{x_1(E)}\sqrt{E-V_1(x)}dx\right),\quad
C_0(E;h):=
\frac{ihe^{iS(E)/h}\pi W_0}{(V'_1(0)-V'_2(0))\sqrt{E_0}},
\end{align}
où la branche de la racine carrée de  $E-V_j\in\bbC\setminus(-\infty,0]$ est telle que la partie réelle est positive. 
On note l'ensemble des racines de $C_0(E;h)=1$ (dites \textit{pseudo-résonances} dans \cite{BFRZbook}) par 
\be
\ope{Res}_0(P;M):=\{E\in R_h(M);\, C_0(E,h)=1\}.
\ee
Pour $E\in \ope{Res}_0(P;M)$, on a 
$$
C'_0(E;h)=ih^{-1}S'(E)C_0(E;h)=ih^{-1}S'(E)=ih^{-1}\left(S'(E_0)+\ord(h\log (1/h))\right).
$$ 
Donc pour $h$ suffisamment petit, $C'(E;h)$ ne s'annule pas. Par suite, les pseudo-résonances sont simples. 
Nous avons $E\in\ope{Res}_0(P;M)$ si et seulement si $E\in R_h(M)$ et vérifiant les deux identités
\be\label{eq:AsyPR}
\re(S(E))=h\bigl[(2n+1)\pi-\arg W_0\bigr],\quad
\im(S(E))=-h\left(\log{\frac 1h}+\log \frac{\pi|W_0|}{(V_1'(0)-V_2'(0))\sqrt{E_0}}\right),
\ee
pour un certain $n\in \N$. Par le développement en série de Taylor de $S(E)$ en $E_0$, on a 
\be\label{eq:width}
\im E=-T^{-1}h\left(\log(1/h)+\ord(1)\right)\quad\text{avec}\quad T=S'(E_0).
\ee 
Par suite, pour tout $M>T^{-1}$, l'ensemble $\ope{Res}_0(P;M)$ n'est pas vide pour $h$ suffisamment petit.

Sous les hypothèses \textbf{(A1)} et \textbf{(A2)}, il existe une ``trajectoire périodique" $\gamma$ de période $T$ (voir Figure \ref{Fig:ex}), c'est à dire, une courbe $\gamma:$ $\R\to \Gamma=p_1^{-1}(E_0)\cup p_2^{-1}(E_0)$ 
donnée par 
\be\label{eq:defgamma}
\gamma(t)=
\left\{
\begin{aligned}
&\exp(tH_{p_1})(0,\sqrt{E_0})\ &t\in[0,t_1],\\
&\exp((t-t_1)H_{p_2})(0,-\sqrt{E_0})\ &t\in[t_1,T],
\end{aligned}\right.\quad
\gamma(t+T)=\gamma(t),
\quad
t_j=\left|\int_0^{x_j}\frac{dx}{\sqrt{E_0-V_j(x)}}\right|.
\ee
Alors, $S(E_0)=\int_\gamma\xi dx$ est l'aire du domaine borné par cette trajectoire $\gamma$, et $T=t_1+t_2=\int_\gamma \xi^{-1}dx$ est sa période.

\begin{thm}\label{Theorem}
Supposons les hypothèses \bu{(A1)}--\bu{(A3)}. 
Quelle que soit $M>T^{-1}$ indépendant de $h$, il existe une résonance $E\in\ope{Res}(P)\cap R_h(M)$ de $P(h)$ pour $h$ suffisamment petit.  
Plus précisément, pour chaque $\til{E}\in\ope{Res}_0(P;M)$, il existe une résonance $E\in\ope{Res}(P)\cap R_h(M)$ telle que 
$\left|E-\til{E}\right|=o(h),$ 
et inversement, pour chaque $E\in\ope{Res}(P)\cap R_h(M)$, il existe $\til{E}\in\ope{Res}_0(P;M)$ telle que $\left|E-\til{E}\right|=o(h).$
\end{thm}

Ce théorème peut se généraliser aux cas traités dans \cite{Hi} avec plusieurs croisements.

\section{Preuve du Théorème \ref{Theorem}}
Ici, nous allons rappeler des notions de base de l'analyse microlocale et semiclassique renvoyant le lecteur aux livres \cite{DiSj, Ma3, Zw} pour plus de détails. 
L'opérateur $h$-pseudodifférentiel correspondant à un symbole $a\in \mc{S}(T^*\R)$, noté par $\wey{a}(x,hD)$, est défini sur $\mathcal{S}(\R)$ par 
\begin{align}\label{WeylQ}
\wey{a}(x,hD)u(x)
:=\frac{1}{2\pi h}\int_{\R^2}e^{i(x-y)\xi/h}a\left(\frac{x+y}{2},\xi\right)u(y)dyd\xi.
\end{align}
Soit $f=f(x;h)\in L^2(\R)$ avec $\|f\|_{L^2}\le 1$. 
On écrit $f\sim0$ microlocalement en $(x,\xi)\in T^*\R$ 
s'il existe un symbole $\chi\in \mc{S}(T^*\R)$ tel que $\chi(x,\xi)\neq0$ et $\|\wey{\chi}(x,hD)f\|_{L^2}=\ord(h^\infty).$  
On écrit $f\sim g$ si $f-g\sim0$. 

Revenons \`a notre op\'erateur $P(h)$.
Pour chaque $j=1,2$, la trajectoire classique $p_j^{-1}(E_0)$   passe par un point tournant $(x_j,0)$ et elles se croisent aux points $(0,\pm\sqrt{E_0})$. 
L'ensemble caract\'eristique $\Gamma$ priv\'e de ces 4 points
se compose de 8 courbes connexes (voir Figure \ref{Fig:ex}), et le long de chaque courbe, l'espace des solutions microlocales de $(P(h)-E)u\sim 0$ pour $E\in R_h(M)$ est de dimension 1 engendr\'e par une solution BKW.
Sur la partie $\gamma(]0,t_1/2[)$, par exemple, elle est de la forme
\be\label{eq:WKB}
\begin{aligned}
&w\sim 
e^{i\phi(x)/h}
\left (
\begin{array}{c}
\sigma_1 \\
h\sigma_2
\end{array}
\right ),
\quad\text{o\`u}\quad
\phi(x):= \int_0^x \sqrt{E-V_1(t)} \,dt,
\quad
\sigma_j=(1+\ord(h))\sigma_{j,0}(x,E)=\ord(1),\\
&\text{avec}\quad
\sigma_{1,0}(x,E)=\frac{1}{(E-V_1(x))^{1/4}},\quad
\sigma_{2,0}(x,E)=\frac{r_0(x)-ir_1(x)\sqrt{E-V_1(x)}}{(V_1(x)-V_2(x))(E-V_1(x))^{1/4}}.
\end{aligned}
\ee
Ici, $w$ est analytique pour $E\in R_h(M)$. De plus, on a $w=\ord(h^{-N_1})$ pour un certain $N_1>0$ uniformément sur chaque compact dans $]0,x_1[$ car $\left|\im E\right|\le Mh\log(1/h)$.

On prend deux points $\rho_a=\gamma(a)$, $\rho_b=\gamma(b)$, $0<a<b<t_1/2$ sur la courbe $\gamma(]0,t_1/2[)$ (voir Figure 1).
On prolonge une solution microlocale $u$  en $\rho_b$ jusqu'en $\rho_a$ dans les deux directions le long de $\gamma$.
D'abord, on la prolonge dans la direction négative le long de $\gamma([a,b])$. Si $(P-E)u\sim 0$ microlocalement pr\`es de cette courbe et si $u\sim \alpha_b w$ en $\rho_b$, $u\sim \alpha_a w$ en $\rho_a$, alors \'evidemment  $\alpha_a=\alpha_b$. Maintenant, on prolonge la solution dans la direction positive le long de $\gamma([b, T+a])$ en supposant que $(P-E)u\sim 0$ pr\`es de cette courbe, notant que l'on passe par deux points de croisement et deux points tournants.
D'après \cite[Sec.~5]{FMW3}, on a le lemme suivant qui dit que la solution microlocale en $\rho_a$ est uniquement d\'etermin\'ee par celle en $\rho_b$ et donne une relation explicite au niveau principal, sous la condition que $u\sim 0$ sur les courbes entrantes $\Gamma_{\ope{ent}}^1=p_1^{-1}(E_0)\cap\{x<0,\xi>0\}$ et $\Gamma_{\ope{ent}}^2=p_2^{-1}(E_0)\cap\{x>0,\xi<0\}$, la solution microlocale en $\rho_a$ est uniquement déterminée (voir aussi \cite[Sec.~3.2.1]{Hi}). En utilisant les trois arguments qui suivent : 
\begin{itemize}
\item le théorème de propagation de singularit\'es standard \cite[Theorem~12.5]{Zw}, valide aussi dans le cas des systèmes, 
\item la th\'eorie de Maslov le long des trajectoires hamiltoniennes du type principale \cite[Lemma~6.1]{FMW3}, qui permet de définir la solution près des points tournants, 
\item l'\'etude microlocale des solutions au voisinage des points de croisements \cite[Propositions~5.7 and 5.9]{FMW3}, qui permet d'étendre la solution à travers ses points,
\end{itemize}
nous obtenons le résultat suivant :
\begin{lemma}[{\cite[Section~3.2.1]{Hi}}]\label{lem:Mono}
Supposons que $u\in L^2(\R;\bbC^2)$ avec $\|u\|_{L^2}\le1$ vérifie $(P(h)-E)u\sim 0$ microlocalement dans un voisinage de $\gamma([b,T+a])$ et $u\sim 0$ le long de $\Gamma_{\ope{ent}}^1\cup\Gamma_{\ope{ent}}^2$. 
Si $u\sim\alpha_b w$ microlocalement en $\rho_b$, alors $u\sim\alpha_a w\text{ microlocalement en }\rho_a$, 
où $\alpha_a$ est donné par 
\be\label{eq:monodromie}
\alpha_a=C(E,h)\alpha_b\quad\text{avec}\quad C(E,h)=C_0(E,h)(1+{\mathcal O}(h\log(1/h))).
\ee
\end{lemma}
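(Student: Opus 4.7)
L'id\'ee est de suivre $u$ microlocalement le long de $\gamma([b, T+a])$ \'etape par \'etape, puis d'identifier la constante de monodromie en composant les facteurs obtenus. Je d\'ecouperais d'abord cet arc en morceaux r\'eguliers s\'epar\'es par les deux points tournants $(x_1,0)$, $(x_2,0)$ et les deux points de croisement $(0,\pm\sqrt{E_0})$, et appliquerais sur chaque portion l'outil microlocal appropri\'e parmi les trois cit\'es dans l'\'enonc\'e.

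Sur chaque arc r\'egulier, la propagation des singularit\'es \cite[Theorem~12.5]{Zw} (valide aussi pour les syst\`emes $2\times 2$) assure que l'espace des solutions microlocales de $(P-E)u\sim 0$ est de dimension $1$, engendr\'e par une BKW analogue \`a \eqref{eq:WKB}. Le coefficient multiplicatif est alors conserv\'e sur l'arc, tandis que la phase accumule $h^{-1}\int\sqrt{E-V_j(x)}\,dx$. Aux points tournants, j'invoquerais la th\'eorie de Maslov \cite[Lemma~6.1]{FMW3} pour raccorder les BKW de part et d'autre. Aux points de croisement, j'utiliserais les formules de connexion \cite[Propositions~5.7 and 5.9]{FMW3} : l'hypoth\`ese $u\sim 0$ sur $\Gamma_{\ope{ent}}^1\cup\Gamma_{\ope{ent}}^2$ est ici cruciale, puisqu'elle \'elimine toute donn\'ee entrante parasite en chaque croisement et assure que la branche sortante sur $\gamma$ est uniquement d\'etermin\'ee par la branche entrante sur $\gamma$.

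La composition de toutes ces op\'erations sur la boucle fournit alors une relation scalaire $\alpha_a = C(E,h)\alpha_b$. Au symbole principal, la phase globale vaut $e^{iS(E)/h}$ par la d\'efinition \eqref{eq:Const} de $S(E)$, et le pr\'efacteur $ih\pi W_0/[(V_1'(0)-V_2'(0))\sqrt{E_0}]$ provient de la combinaison des facteurs Maslov aux points tournants et des contributions aux deux croisements ($W_0$ y appara\^it via le symbole principal de $W$, et $V_1'(0)-V_2'(0)$ via la d\'eg\'en\'erescence contr\^ol\'ee en $x=0$). Ceci est pr\'ecis\'ement $C_0(E,h)$.

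La correction relative $\ord(h\log(1/h))$ dans $C(E,h) = C_0(E,h)(1+\ord(h\log(1/h)))$ r\'esulte d'une part des erreurs $\ord(h)$ dans chaque \'etape BKW et dans chaque formule de raccordement, d'autre part du d\'eveloppement de Taylor en $E$ autour de $E_0$, puisque $|E-E_0|\le Mh\log(1/h)$ dans $R_h(M)$. Le point le plus d\'elicat sera l'analyse aux deux points de croisement, o\`u il faut recoller les BKW entre les deux feuilles caract\'eristiques via l'analyse FBI fine de \cite[Propositions~5.7 and 5.9]{FMW3}. Une fois ces formules admises, le reste de la preuve se r\'eduit \`a une composition directe le long de $\gamma$, d\'etaill\'ee dans \cite[Section~3.2.1]{Hi}.
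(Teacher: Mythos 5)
Your proposal follows essentially the same route as the paper. The paper itself does not give a self-contained proof of this lemma; it cites the detailed argument in \cite[Section~3.2.1]{Hi} and, just before the statement, lists precisely the three microlocal ingredients that you identify: propagation of singularities \cite[Theorem~12.5]{Zw} along the regular arcs, Maslov theory \cite[Lemma~6.1]{FMW3} at the two turning points $(x_1,0)$ and $(x_2,0)$, and the connection formulas \cite[Propositions~5.7, 5.9]{FMW3} at the two crossings $(0,\pm\sqrt{E_0})$, the hypothesis $u\sim 0$ sur $\Gamma_{\ope{ent}}^1\cup\Gamma_{\ope{ent}}^2$ being used at the crossings to kill the incoming data on the other branch so that the outgoing amplitude on $\gamma$ is determined by a $1\times1$ transfer. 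Composing these around $\gamma([b,T+a])$ gives the monodromy constant, with the $\ord(h\log(1/h))$ relative error coming from the $\ord(h)$ subprincipal corrections at each step combined with $|E-E_0|=\ord(h\log(1/h))$. This is exactly what you describe, at the same level of detail as the paper. The one place you remain a bit vague is the bookkeeping that produces a single power of $h$ and a single factor of $W_0$ in $C_0$ despite passing through two branch-switching crossings; but the paper does not spell this out either and defers to \cite{FMW3,Hi}, so this does not constitute a gap relative to the paper's own exposition.
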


La formule \eqref{eq:Const} et l'estimation \eqref{eq:monodromie} de l'erreur impliquent que les racines de $C(E;h)=1$ sont 
à l'ordre $h^2\log(1/h)$ approchées par celles de $C_0(E;h)=1$.  
En effet, nous avons l'estimation suivante :
\begin{equation*}
|C_0(E;h)-1|=|(E-E_1)C_0'(E_1;h)+\ord(h^{-2}(E-E_1)^2)|\gtrsim  h^{-1}|E-E_1|,
\end{equation*}
pour $E_1\in \ope{Res}_0(P;M)$ et $E\in R_h(M)$ avec $|E-E_1|=o(h)$, et le théorème de Rouché nous permet de conclure.


Pour la démonstration de Théorème \ref{Theorem} on applique la méthode microlocale établie dans \cite{BFRZbook} pour les résonances créées par des trajectoires homoclines et hétéroclines d'un opérateur de Schrödinger scalaire multidimensionnel. 
Les arguments de \cite{BFRZbook} restent valable tant que $\theta=\ord(h\log(1/h))$, et donc s'appliquent à notre problème. 

On écrit $B(z;r)=\{\zeta\in\bbC;\,\left|\zeta-z\right|<r\}$ $(z\in\bbC,\,r>0)$.  
\begin{proposition}\label{prop1}
Quelles que soient $M,\e>0$, il n'existe pas de résonance de $P$ dans le domaine 
\be\label{eq:FreeDom}
R_h^\e(M):=R_h(M)\setminus (\ope{Res}_0(P;M)+B(0,\e h)),
\ee
pour $h>0$ assez petit. 
Plus précisément, il existe un nombre $N>0$ tel que
\be\label{eq:ResEsti}
\|(P_\theta-E)^{-1}\|= \ord(h^{-N}),
\ee
uniformément pour $E\in R_h^\e(M)$. 
Ici, $\left\|\cdot\right\|$ est la norme des opérateurs bornés définis sur $L^2(\R;\bbC^2)$.
\end{proposition}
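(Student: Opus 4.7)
Le plan est d'appliquer le raisonnement par contradiction de type microlocal développé dans \cite{Bu, BFRZbook}. Supposons par l'absurde que \eqref{eq:ResEsti} soit en défaut: il existerait alors une suite $h_n \to 0_+$, $E_n \in R_{h_n}^\e(M)$ et $u_n \in L^2(\R; \bbC^2)$ avec $\|u_n\|_{L^2} = 1$ vérifiant $\|(P_\theta - E_n) u_n\|_{L^2} = o(h_n^N)$ pour tout $N$. On va montrer que cela entraîne $u_n \sim 0$ microlocalement partout, contredisant la normalisation $\|u_n\|_{L^2} = 1$.

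D'abord, par l'ellipticité de $P_\theta - E$ hors de $\Gamma = p_1^{-1}(E_0) \cup p_2^{-1}(E_0)$, le front d'onde semiclassique de $u_n$ est contenu dans $\Gamma$. La condition sortante fournie par la distorsion analytique (voir \cite[Section~2.7]{DyZw}) implique que $u_n \sim 0$ microlocalement le long des trajectoires entrantes $\Gamma_{\ope{ent}}^1$ et $\Gamma_{\ope{ent}}^2$. Soit alors $\alpha_b \in \bbC$ tel que $u_n \sim \alpha_b w$ microlocalement en $\rho_b$. Le prolongement microlocal dans la direction négative le long de $\gamma([a,b])$, qui ne rencontre ni point tournant ni croisement, donne $u_n \sim \alpha_b w$ en $\rho_a$, tandis que le prolongement dans la direction positive le long de $\gamma([b, T+a])$ fournit, par le Lemme \ref{lem:Mono}, $u_n \sim C(E_n, h_n) \alpha_b w$ en $\rho_a$. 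Ces deux représentations doivent coïncider, d'où
\[
(C(E_n, h_n) - 1) \alpha_b = o(h_n^N) \quad\text{pour tout } N.
\]

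Or, $E_n$ est à distance $\geq \e h_n$ de $\ope{Res}_0(P; M)$, et le développement de Taylor de $C_0$ autour d'une pseudo-résonance voisine $E_1$ donne $|C_0(E_n; h_n) - 1| \gtrsim h_n^{-1}|E_n - E_1| \geq c\e$, puis $|C(E_n, h_n) - 1| \geq c' \e > 0$ par \eqref{eq:monodromie} pour $h_n$ assez petit. Il s'ensuit que $\alpha_b = o(h_n^N)$, et donc $u_n \sim 0$ sur la trajectoire périodique $\gamma$. La propagation microlocale à travers les points de croisement fait alors s'annuler $u_n$ sur les trajectoires sortantes également, donc $u_n \sim 0$ sur tout $\Gamma$, ce qui, combiné avec l'ellipticité hors de $\Gamma$, implique $\|u_n\|_{L^2} = o(h_n^\infty)$, contradiction. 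L'obstacle principal sera de vérifier que le Lemme \ref{lem:Mono} ainsi que les formules de connection aux points tournants \cite[Lemma~6.1]{FMW3} et aux croisements \cite[Propositions~5.7 and 5.9]{FMW3} restent valables uniformément pour une distorsion $\theta = O(h\log(1/h))$ et pour des quasi-modes avec second membre $o(h^\infty)$, ce qui résulte de l'adaptation soigneuse des arguments de \cite{BFRZbook}.
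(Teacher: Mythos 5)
Your argument is correct and follows essentially the same path as the paper's proof: contradiction via a normalized quasi-mode, vanishing on the incoming branches from the outgoing condition of the distortion, the monodromy relation $(1-C(E;h))\alpha_b = \ord(h^\infty)$ from the Lemme~\ref{lem:Mono}, the uniform lower bound $|1-C|\gtrsim c(\e)$ on $R_h^\e(M)$, and propagation to conclude $u\sim 0$ globally, hence $\|u\|_{L^2}=\ord(h^\infty)$. The paper compresses the final step (from microlocal vanishing near the trapped set to the $L^2$ contradiction) into a citation of~\cite[Sec.~8]{BFRZbook}, whereas you unfold the ellipticity/propagation reasoning a bit more explicitly; the only point worth being slightly more careful about is that the Taylor estimate $|C_0(E;h)-1|\gtrsim h^{-1}|E-E_1|$ is stated in the paper under the restriction $|E-E_1|=o(h)$, so to extend the lower bound to all of $R_h^\e(M)$ one should also invoke that pseudo-resonances are spaced by $\approx 2\pi h/T$, ensuring the nearest $E_1$ is always at distance $\ord(h)$. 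This is a minor gap, not an error in approach.
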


\begin{proof}
On démontre cette proposition par l'absurde. Supposons que l'estimation \eqref{eq:ResEsti} ne soit pas vraie. Alors, il existerait 
une suite de $h$, $E$ et $u$ telle que $h$ tend vers $0$, $E\in R_h^\e$ et $u\in H^2(\R;\bbC^2)$ avec $\|u\|_{L^2(\R;\bbC^2)}=1$ et 
\be\label{eq:QuasiM}
\|(P_\theta-E)u\|_{L^2(\R;\bbC^2)}=\ord(h^\infty).
\ee 
La condition \eqref{eq:QuasiM} implique que $u\sim0$ sur les courbes entrantes $\Gamma_{\ope{ent}}^1\cup\Gamma_{\ope{ent}}^2$. 
Donc si on montre que $u\sim0$ microlocalement près de l'ensemble capté $\gamma([0,T])$, alors on en déduirait $\|u\|_{L^2(\R;\bbC^2)}=\ord(h^\infty)$ par le même argument que \cite[Sec.~8]{BFRZbook}, d'où la contradiction. 

Soient $\alpha_a,\alpha_b$ tels que $u\sim\alpha_{a} w$ microlocalement en $\rho_{a}$ et $u\sim\alpha_{b} w$ microlocalement en $\rho_{b}$. 
Puisque $(P-E)u\sim0$ le long de $\gamma$, Lemme~\ref{lem:Mono} implique
$(1-C(E;h))\alpha_b=\ord(h^\infty).$ 
Pour $\e>0$ fixé, il existe $c(\e)>0$ indépendant de $h$ tel que 
$\left|1-C(E;h)\right|\ge c(\e)$ uniformément pour $E\in R_h^\e(M)$. 
Cela implique que $\alpha_b=\ord(h^\infty)$, et par conséquence, $u\sim0$ microlocalement en $\rho_b$. 
Encore une fois par le théorème de propagation des singularités le long de $\gamma$, on en déduit que $u\sim0$ microlocalement près de l'ensemble capté.
\end{proof}

\begin{proposition}\label{prop2}
Pour tout $M>T^{-1}$, $\e>0$ et $\til{E}\in\ope{Res}_0(P;M)$, il existe au moins une résonance dans $B(\til{E},\e h)$ pour $h$ suffisamment petit.
\end{proposition}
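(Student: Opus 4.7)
Nous suivons la stratégie par l'absurde de \cite{BFRZbook}. Fixons $\til E \in \ope{Res}_0(P;M)$ et supposons que $B(\til E, \e h)$ ne contienne aucune résonance de $P$. Le plan a trois étapes: (i) en déduire une estimation uniforme de la résolvante distordue $(P_\theta - z)^{-1}$ dans tout $B(\til E, \e h)$; (ii) construire un quasimode $u_{\til E}$ localisé sur $\gamma$ vérifiant $\|(P_\theta - \til E)u_{\til E}\|_{L^2}=\ord(h^\infty)$ et $\|u_{\til E}\|_{L^2}\sim 1$; puis (iii) confronter (i) et (ii) pour obtenir la contradiction.

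Pour (i), d'après \eqref{eq:AsyPR} et $S'(E_0)=T>0$, les pseudo-résonances sont horizontalement espacées de $\sim 2\pi h/T$, donc pour $\e>0$ assez petit, $\til E$ est l'unique élément de $\ope{Res}_0(P;M)$ dans $\overline{B(\til E, 2\e h)}$, et tout cercle $\p B(\til E, rh)$ avec $r\in (\e, 2\e)$ est inclus dans $R_h^\e(M)$. La Proposition~\ref{prop1} y fournit $\|(P_\theta - z)^{-1}\|\le C h^{-N}$. Comme, par l'hypothèse de contradiction, $z\mapsto \langle(P_\theta - z)^{-1}f,g\rangle$ est holomorphe dans $\overline{B(\til E, rh)}$ pour tous $f,g\in L^2$ de norme $\le 1$, la formule intégrale de Cauchy propage cette borne à l'intérieur: $\|(P_\theta - \til E)^{-1}\|\le C'h^{-N}$.

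Pour (ii), je partirais de la solution BKW $w|_{E=\til E}$ de \eqref{eq:WKB}, multipliée par une fonction de troncature microlocale près de $\rho_b$, et je la prolongerais positivement le long de $\gamma$ sur une période complète, en traversant successivement deux points de croisement (\cite[Propositions~5.7 et 5.9]{FMW3}) et deux points tournants (\cite[Lemma~6.1]{FMW3}). D'après Lemme~\ref{lem:Mono}, la valeur obtenue en $\rho_a$ après un tour coïncide avec la valeur initiale multipliée par $C(\til E, h)$; comme $C_0(\til E, h)=1$, cette discrépance vaut $\ord(h\log(1/h))$. On la réduirait à $\ord(h^\infty)$ par un schéma itératif standard: à chaque ordre en $h$, on résout les équations de transport sous-principales et on corrige les coefficients des formules de connexion, sommant la série asymptotique obtenue par un procédé de Borel. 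Le support microlocal étant compact, la distorsion $U_\theta$ assure alors l'intégrabilité $L^2$ à l'infini.

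L'étape (iii) est alors immédiate: $1\sim \|u_{\til E}\| \le \|(P_\theta - \til E)^{-1}\|\cdot \|(P_\theta - \til E) u_{\til E}\| \le Ch^{-N}\cdot \ord(h^\infty)$, contradiction pour $h$ assez petit. L'obstacle principal sera l'itération du quasimode jusqu'à l'ordre $\ord(h^\infty)$, car les corrections sous-principales aux points de croisement (où le symbole matriciel admet un croisement de valeurs propres) demandent une analyse microlocale délicate. Alternativement, comme dans \cite{BFRZbook}, on pourrait installer un problème de Grushin à l'aide du quasimode à l'ordre principal seulement, calculer l'hamiltonien effectif $E_{-+}(z,h)=1-C(z,h)+o(h)$, et conclure directement par le théorème de Rouché, obtenant en prime l'asymptotique $o(h)$ annoncée dans le Théorème~\ref{Theorem}.
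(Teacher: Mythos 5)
The paper's proof of this proposition is a contour-integration argument following \cite{BFRZbook}, and it is genuinely different from your main proposal. Rather than building an accurate quasimode, the paper defines a test source $v:=\psi^w[P,\chi^w]w$ microlocalized on a short arc of $\gamma$, puts $u:=(P_\theta-E)^{-1}v$ for $E\in\p B(\til E,\e h)$, and uses Lemme~\ref{lem:Mono} to show that microlocally at $\rho_b$ one has $u\sim(1-C(E,h))^{-1}w$. Integrating over $\p B(\til E,\e h)$ produces a nonzero residue at the unique simple root of $C(E,h)=1$ inside the ball, while $\oint_{\p B}u\,dE$ would vanish if $(P_\theta-E)^{-1}$ were pole-free there; this contradiction yields a resonance in the ball, with no need to construct a global $\ord(h^\infty)$ quasimode.

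Your step~(ii) has a genuine gap. At the pseudo-resonance $\til E$ one only knows $C(\til E,h)=1+\ord(h\log(1/h))$, not $1+\ord(h^\infty)$, and the ``schéma itératif standard'' you invoke does not repair this: refining the transport equations and connection coefficients order by order refines the WKB \emph{solution}, but $C(\til E,h)$ is a fixed asymptotic series attached to the chosen energy — after one loop around $\gamma$ the solution returns multiplied by $C(\til E,h)\neq1$, leaving an irreducible $\ord(h\log(1/h))$ jump in the would-be quasimode. Combined with the resolvent bound $\ord(h^{-N})$ from step~(i), this gives $1\lesssim h^{-N}\cdot\ord(h\log(1/h))$, which is no contradiction once $N\ge1$. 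The fix is to relocate the construction to the exact root $\til E_0$ of $C(E,h)=1$ (which exists with $|\til E_0-\til E|=\ord(h^2\log(1/h))$, as in the Rouché argument after Lemme~\ref{lem:Mono}), where the monodromy is $1$ up to $\ord(h^\infty)$; you do not do this. Your Grushin alternative in the last paragraph is in the spirit of \cite{BFRZbook} and would also work, and in fact is closer to what that reference actually does; but the paper's explicit argument avoids both the accurate quasimode and the Grushin reduction via the contour integral above.
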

\begin{proof}
Soit $\til{E}\in \ope{Res}_0(P;M)$. 
Prenons $\dl>0$ petit et des troncatures $\chi,\psi\in C_0^\infty(T^*\R)$ supportées dans un voisinage de $\gamma(]0,t_1/2[)$ vérifiant 
\begin{align*}
&
\chi=1\quad\text{sur }\ \gamma([a-2\dl,a+\dl])
,\quad
\ope{supp}\chi\cap \gamma([0,t_1/2])\subset \gamma(]0,b-\dl]),\\
&
\psi=1\quad\text{sur }\ 
\ope{supp}(\nabla\chi)\cap \gamma(]0,a-2\dl]),
\quad
\ope{supp}\psi\cap\gamma([0,t_1/2])\subset \gamma(]0,a-\dl]).
\end{align*}
On définit la fonction test $v$ par 
$v:=\psi^w[P,\chi^w]w,$ et $u(x,h;E):=(P_\theta-E)^{-1}v$ pour $E\in\p B(\til{E},\e h)$. 
D'après \eqref{eq:WKB}, $w$ est analytique et $\ord(h^{-N_1})$ pour $E\in B(\til{E},\e h)$, donc $v$ l'est aussi. 
L'estimation \eqref{eq:ResEsti} montre que $u$ est bien définie et $\|u\|_{L^2}\lesssim h^{-(N+N_1)}$ uniformément sur $\p B(\til{E},\e h)$. 
Posons $u_{\ope{test}}:=\chi^w w$. Alors il vérifie $(P-E)u_{\ope{test}}\sim v$ le long de $\gamma([b,T+a])$, $u_{\ope{test}}\sim 0$ en $\rho_b$ et $u_{\ope{test}}\sim w$ en $\rho_a$. 
En effet, on a $[P,\chi^w]\sim0$ en dehors du $\ope{supp}(\nabla\chi)$ et 
\begin{align*}
(P-E)u_{\ope{test}}
=
(P-E)\chi^ww
\sim \psi^w[P,\chi^w]w
=v\ 
\text{microlocalement près de }\gamma([b,T+a]). 
\end{align*}
Par contre, $u_{\ope{hom}}:=u-u_{\ope{test}}$ vérifie $(P-E)u_{\ope{hom}}\sim0$ le long de $\gamma([b,T+a])$. 
D'après Lemme~\ref{lem:Mono}, on a $\alpha_a=C\alpha_b$, où $\alpha_a,$ $\alpha_b$ sont les constantes telles que $u_{\ope{hom}}\sim\alpha_b w$ en $\rho_b$ et $u_{\ope{hom}}\sim \alpha_aw$ en $\rho_a$. 
Comme $v\sim0$ microlocalement près de $\gamma([a,b])$, $u$ est une solution de l'équation homogène. Donc on a $u=u_{\ope{hom}}+u_{\ope{test}}\sim C\alpha_bw+w\sim Cu+w$ microlocalement en $\rho_b$, c'est à dire,
\be\label{eq:TestShift}
u\sim (1-C)^{-1} w
\quad\text{microlocalement en }\rho_b.
\ee
Comme il existe une unique racine simple $\til{E}_0$ de $C(E;h)=1$ dans $B(\til{E},\e h)$, la valeur
\be\label{eq:Oint1}
\oint_{\p B(\til{E},\e h)}\,\,\frac{w(x,h;E)}{1-C(E;h)}dE
=-2\pi i \frac{w(x,h;\til{E}_0)}{C'(\til{E}_0;h)}
\ee
ne s'annule pas. Ici, on utilise que $C'(E_0;h)$ est non nulle pour $h$ suffisamment petit. 
Par contre, s'il n'existait pas de résonance, un pôle de $(P_\theta-E)^{-1}$, dans $B(\til{E},\e h)$, on aurait aussi
\be\label{eq:Oint2}
\oint_{\p B(\til{E},\e h)}\,\,u(x,h;E)dE
=\oint_{\p B(\til{E},\e h)}\,\,(P_\theta-E)^{-1}v(x,h;E)dE=0.
\ee
Les égalités \eqref{eq:Oint1} et \eqref{eq:Oint2} sont contradictoires avec l'égalité \eqref{eq:TestShift}. 
Par conséquence, il doit y avoir au moins une résonance dans la boule $B(\til{E},\e h)$.
\end{proof}


\bibliography{ResGenbib1}

\begin{thebibliography}{10}

\bibitem{Ma2}
Andr{\'e} Martinez.
\newblock Resonance free domains for non globally analytic potentials.
\newblock {\em Annales Henri Poincar{\'e}}, 3(4):739--756, 2002.

\bibitem{FMW3}
Setsuro Fujii{\'e}, Andr{\'e} Martinez, and Takuya Watanabe.
\newblock Widths of resonances above an energy-level crossing.
\newblock {\em J. Funct. Anal.}, 280(6), 2021.

\bibitem{Bu}
Nicolas Burq.
\newblock Semi-classical estimates for the resolvent in nontrapping geometries.
\newblock {\em Int. Math. Res. Notices}, 2002(5):221--241, 2002.

\bibitem{BFRZbook}
Jean-Fran{\c{c}}ois Bony, Setsuro Fujiié, Thierry Ramond, and Maher Zerzeri.
\newblock {\em Resonances for homoclinic trapped sets}.
\newblock Ast\'erisque 405, Soci\'et\'e Math\'ematique de France, 2018.

\bibitem{Hi}
Kenta Higuchi.
\newblock Resonance free domain for a system of schr{\"o}dinger operators with
  energy-level crossings.
\newblock {\em Rev. Math. Phys.}, 2021.

\bibitem{DyZw}
Semyon Dyatlov and Maciej Zworski.
\newblock {\em Mathematical theory of scattering resonances}, volume 200.
\newblock American Mathematical Soc., 2019.

\bibitem{HeSj1}
B.~Helffer and J.~Sjöstrand.
\newblock Résonances en limite semi-classique.
\newblock {\em Mémoires de la Société Mathématique de France},
  24-25:1--228, 1986.

\bibitem{BCD}
Philippe Briet, Jean-Michel Combes, and Pierre Duclos.
\newblock On the location of resonances for schr{\"o}dinger operators in the
  semiclassical limit i. resonances free domains.
\newblock {\em J. Math. Anal. Appl.}, 126(1):90--99, 1987.

\bibitem{BCD2}
Philippe Briet, Jean-Michel Combes, and Pierre Duclos.
\newblock On the location of resonances for schr{\"o}ginder operators in the
  semiclassical limit ii. barrier top resonances.
\newblock {\em Commun. Part. Diff. Eq.}, 12(2):201--222, 1987.
\newblock Erratum: Commun. Part. Diff. Eq., 13 (1988), no. 3, p. 377--381.

\bibitem{GeSj}
Christian G{\'e}rard and Johannes Sj{\"o}strand.
\newblock Semiclassical resonances generated by a closed trajectory of
  hyperbolic type.
\newblock {\em Commun. Math. Phys.}, 108(3):391--421, 1987.

\bibitem{Sj}
Johannes Sj{\"o}strand.
\newblock Semiclassical resonances generated by non-degenerate critical points.
\newblock pages 402--429. Springer Lect. Notes. in Math. 1256, 1987.

\bibitem{DiSj}
Mouez Dimassi and Johannes Sjöstrand.
\newblock {\em Spectral asymptotics in the semi-classical limit}.
\newblock Number 268. Cambridge university press, 1999.

\bibitem{Ma3}
Andr{\'e} Martinez.
\newblock {\em An introduction to semiclassical and microlocal analysis},
  volume 994.
\newblock Springer, 2002.

\bibitem{Zw}
Maciej Zworski.
\newblock {\em Semiclassical analysis}, volume 138.
\newblock American Mathematical Soc., 2012.

\end{thebibliography}
\bibliographystyle{unsrt}
\nocite{BCD2}

\end{document}